\newcommand{\Endproof}{\hfill$\Box$\\}
\begin{document}

\title{Quantum Algorithm for Dynamic Programming Approach for DAGs. Applications for Zhegalkin Polynomial Evaluation and Some Problems on DAGs}
\author{Kamil~Khadiev$^{1,2}$, Liliya Safina$^{2}$} 

\institute{Smart Quantum Technologies Ltd., Kazan, Russia\and  Kazan Federal University, Kazan, Russia \\ \email{kamilhadi@gmail.com, liliasafina94@gmail.com} 
}

\maketitle

\begin{abstract}
In this paper, we present a quantum algorithm for dynamic programming approach for problems on directed acyclic graphs (DAGs). The running time of the algorithm is $O(\sqrt{\hat{n}m}\log \hat{n})$, and the running time of the best known deterministic algorithm is $O(n+m)$, where $n$ is the number of vertices, $\hat{n}$ is the number of vertices with at least one outgoing edge; $m$ is the number of edges. We show that we can solve problems that use OR, AND, NAND, MAX and MIN functions as the main transition steps. The approach is useful for a couple of problems. One of them is computing a Boolean formula that is represented by Zhegalkin polynomial, a Boolean circuit with shared input and non-constant depth evaluating. Another two are the single source longest paths search for weighted DAGs and the diameter search problem for unweighted DAGs.

\textbf{Keywords:} quantum computation, quantum models, quantum algorithm, query model, graph, dynamic programming, DAG, Boolean formula, Zhegalkin polynomial, DNF, AND-OR-NOT formula, NAND, computational complexity, classical vs. quantum, Boolean formula evaluation
\end{abstract}

\section{Introduction}
\emph{Quantum computing} \cite{nc2010,a2017} is one of the hot topics in computer science of last decades.
There are many problems where quantum algorithms outperform the best known classical algorithms \cite{dw2001,quantumzoo}. 
superior of quantum  over classical was shown for different computing models like query model, streaming processing models, communication models and others \cite{an2009,av2009,agky14,agky16,aakk2018,aakv2018,kkm2018,kk2017,ikpy2018,l2009}.

In this paper, we present the quantum algorithm for the class of problems on directed acyclic graphs (DAGs) that uses a dynamic programming approach.
The dynamic programming approach is one of the most useful ways to solve problems in computer science \cite{cormen2001}. The main idea of the method is to solve a problem using pre-computed solutions of the same problem, but with smaller parameters. Examples of such problems for DAGs that are considered in this paper are the single source longest path search problem for weighted DAGs and the diameter search problem for unweighted DAGs.
 
 Another example is a Boolean circuit with non-constant depth and shared input evaluation. A Boolean circuit can be represented as a DAG with conjunction (AND) or disjunction (OR) in vertices, and inversion (NOT) on edges. We present it as an algorithm for computing a Zhegalkin polynomial \cite{z1927,gg85,y2003}. The Zhegalkin polynomial is a way of a Boolean formula representation using ``exclusive or'' (XOR, $\oplus$), conjunction (AND) and the constants $0$ and $1$. 

The best known deterministic algorithm for dynamic programming on DAGs uses depth-first search algorithm (DFS) as a subroutine \cite{cormen2001}. Thus, this algorithm has at least depth-first search algorithm's time  complexity, that is $O(n+m)$, where $m$ is the number of edges and $n$ is the number of vertices. The query complexity of the algorithm is at least $O(m)$.

We suggest a quantum algorithm with the running time $O(\sqrt{\hat{n}m}\log \hat{n})$, where $\hat{n}$ is the number of vertices with non-zero outgoing degree. In a case of $\hat{n}(\log \hat{n})^2<m$, it shows speed-up comparing with deterministic algorithm. The quantum algorithm can solve problems that use a dynamic programming algorithm with OR, AND, NAND, MAX or MIN functions as transition steps. We use Grover's search  \cite{g96,bbht98} and D{\"u}rr and H{\o}yer maximum search \cite{dh96} algorithms to speed up our search. A similar approach has been applied by D{\"u}rr et al. \cite{dhhm2004,dhhm2006}; Ambainis and {\v{S}}palek \cite{as2006}; D{\"o}rn \cite{d2009,d2008phd} to several graph problems.

We apply this approach to four problems that discussed above. 
The first of them involves computing Boolean circuits. Such circuits can be represented as AND-OR-NOT DAGs. Sinks of such graph are associated with Boolean variables, and other vertices are associated with conjunction (AND) or a disjunction (OR); edges can be associated with inversion (NOT) function. Quantum algorithms for computing  AND-OR-NOT trees were considered by Ambainis et al. \cite{avrz2010,a2007,a2010}. Authors present an algorithm with running time $O(\sqrt{N})$, where $N$ is the number of tree's vertices. There are other algorithms that allow as construct AND-OR-NOT DAGs of constant depth, but not a tree \cite{bkt2018,ckk2012}.

Our algorithm works with $O(\sqrt{\hat{n}m}\log \hat{n})$ running time for DAGs that can have non-constant depth.

It is known that any Boolean function can be represented as a Zhegalkin polynomial \cite{z1927,gg85,y2003}. The computing of a Zhegalkin polynomial is the second problem.  Such formula can be represented as an AND-OR-NOT DAG. Suppose an original Zhegalkin polynomial has $t_c$ conjunctions and $t_x$ exclusive-or operations. Then the corresponding AND-OR-NOT DAG have $\hat{n}=O(t_x)$ non-sink vertices and $m=O(t_c+t_x)$ edges. 

If we consider AND-OR-NOT trees representation of the formula, then it has an exponential number of vertices $N\geq 2^{O(t_x)}$. The quantum algorithm for trees \cite{avrz2010,a2007,a2010} works in  $O(\sqrt{N})=2^{O(t_x)}$ running time. Additionally, the DAG that corresponding to a Zhegalkin polynomial has non-constant depth. Therefore, we cannot use algorithms from \cite{bkt2018,ckk2012} that work  for circuits with shared input.

The third problem is the single source longest path search problem for a weighted DAG. The best deterministic algorithm for this problem works in $O(n+m)$ running time \cite{cormen2001}. In the case of a general graph (not DAG), it is a NP-complete problem.  Our algorithm for DAGs works in $O(\sqrt{\hat{n}m}\log \hat{n})$ running time. The fourth problem is the diameter search problem for an unweighted DAG. The best deterministic algorithms for this problem works in $O(\hat{n}(n+m))$ expected running time \cite{cormen2001}. Our algorithm for DAGs works in expected $O(\hat{n}(n+\sqrt{nm})\log n)$ running time.

  The paper is organized as follows. We present definitions in Section \ref{sec:prlmrs}. Section  \ref{sec:algorithm} contains a general description of the algorithm. The application to an AND-OR-NOT DAG evaluation and Zhegalkin polynomial evaluation is in Section \ref{sec:andor-dag}. Section  \ref{sec:diameter} contains a solution for the single source longest path search problem for a weighted DAG and the diameter search problem for an unweighted DAG.

\section{Preliminaries}\label{sec:prlmrs}
Let us present definitions and notations that we use.

A graph $G$ is a pair $G=(V,E)$ where $V=\{v_1,\dots,v_n\}$ is a set of vertices, $E=\{e_1,\dots,e_m\}$ is a set of edges, an edge $e\in E$ is a pair of vertices $e=(v,u)$, for $u,v\in V$.
A graph $G$ is directed if all edges $e=(v,u)$ are ordered pairs, and if $(v,u)\in E$, then $(u,v)\not\in E$. In that case, an edge $e$ leads from vertex $v$ to vertex $u$. A graph $G$ is acyclic if there is no path that starts and finishes in the same vertex.
We consider only directed acyclic graphs (DAGs) in the paper.

Let $D_i=(v:\exists e=(v_i,v)\in E)$ be a list of  $v_i$ vertex's out-neighbors. Let $d_i=|D_i|$ be the out-degree of the vertex $v_i$.
 Let $L$ be a set of indices of sinks. Formally, $L = \{i : d_i =0, 1 \leq i \leq n\}$. Let $\hat{n}=n-|L|$.
 
Let $D'_i=(v:\exists e=(v,v_i)\in E)$ be a list of vertex whose  out-neighbor is $v$. 
 
 We consider only DAGs with two additional properties:
 \begin{itemize}
 \item topological sorted:  if there is an edge $e=(v_i,v_j)\in E$, then $i<j$;
 \item last $|L|$ vertices belong to $L$, formally $d_i=0$, for $i>\hat{n}$.
 \end{itemize}

Our algorithms use some quantum algorithms as a subroutine, and the rest is classical. As quantum algorithms, we use query model algorithms. These algorithms can do a query to black box that has access to the graph structure and stored data. As a running time of an algorithm we mean a number of queries to black box. We use an {\em adjacency list model} as a model for a graph representation. The input is specified by $n$ arrays $D_i$, for $i\in\{1\dots n\}$. 
We suggest \cite{nc2010} as a good book on quantum computing.


\section{Quantum Dynamic Programming Algorithm for DAGs}\label{sec:algorithm}

Let us describe an algorithm in a general case. 

Let us consider some problem $P$ on a directed acyclic graph $G=(V,E)$.
Suppose that we have a dynamic programming algorithm for $P$ or we can say that there is a solution of the problem $P$ that is equivalent to computing a function $f$ for each vertex.
As a function $f$ we consider only functions from a set ${\cal F}$ with following properties:
\begin{itemize}
\item $f:V\to \Sigma$.
\item The result set $\Sigma$ can be the set of real numbers $\mathbb{R}$, or integers $\{0,\dots,{\cal Z}\}$, for some integer ${\cal Z}>0$.
\item if $d_i>0$ then $f(v_i)=h_i(f(u_1),\dots,f(u_{d_i}))$, where functions $h_i$ are such that $h_i:\Sigma^{[1,n]}\to \Sigma$; $\Sigma^{[1,n]}=\{(r_1,\dots,r_k): r_j\in\Sigma, 1\leq j \leq k, 1\leq k \leq n\}$ is a set of vectors of at most $n$ elements from $\Sigma$; $(u_1,\dots,u_{d_i})=D_i$.
\item if $d_i=0$ then  $f(v_i)$ is classically computable in constant time. 
\end{itemize}

Suppose there is a quantum algorithm $Q_i$ that computes function $h_i$ with running time $T(k)$, where $k$ is a length of the argument for the function $h_i$. Then we can suggest the following algorithm:

Let $t=(t[1],\dots,t[\hat{n}])$ be an array which stores results of the function $f$. Let $t_f(j)$ be a function such that $t_f(j)=t[j]$, if $j\leq \hat{n}$; $t_f(j)=f(j)$, if $j> \hat{n}$. Note that $j> \hat{n}$ means $v_j\in L$.
\begin{algorithm}
\caption{Quantum Algorithm for Dynamic programming approach on DAGs.}\label{alg:general-dp}
\begin{algorithmic}
\For{$i=\hat{n}\dots 1$}
  \State $t[i] \gets Q_i(t_f(j_1),\dots,t_f(j_{d_i}))$, where $(v_{j_1},\dots,v_{j_{d_i}})=D_i$
\EndFor
\State \Return $t[1]$
\end{algorithmic}
\end{algorithm}

Let us discuss the running time of Algorithm \ref{alg:general-dp}. The proof is simple, but we present it for completeness. 
\begin{lemma}\label{lm:dp-time}
Suppose that the quantum algorithms $Q_i$ works in $T_i(k)$ running time, where $k$ is a length
of an argument, $i\in\{1,\dots,n\}$. Then Algorithm \ref{alg:general-dp} works in $T^1 = \sum\limits_{i\in\{1,\dots,n\}\backslash L}
T_i(d_i)$.
\end{lemma}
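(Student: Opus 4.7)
The plan is to do a straightforward accounting of the work done by the \textbf{for} loop in Algorithm \ref{alg:general-dp}, with just enough care to handle the boundary between internal vertices and sinks.

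First I would note that, by the two structural assumptions on the DAG (topological sorting and the sinks occupying the last $|L|$ positions), the set $\{1,\dots,n\}\setminus L$ is exactly $\{1,\dots,\hat{n}\}$, which is precisely the index range of the loop. So the loop body executes once for each non-sink vertex, and never for a sink. This lines up the index set of the loop with the index set of the sum in the claim.

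Next I would analyze the cost of a single iteration $i\in\{1,\dots,\hat{n}\}$. The body calls $Q_i$ on the arguments $t_f(j_1),\dots,t_f(j_{d_i})$, where $(v_{j_1},\dots,v_{j_{d_i}})=D_i$. Since the graph is topologically sorted, every out-neighbor index $j_\ell$ satisfies $j_\ell>i$, so either $j_\ell\leq \hat{n}$ and the value $t[j_\ell]$ has already been written in a previous (larger-$i$) iteration, or $j_\ell>\hat{n}$ and $v_{j_\ell}\in L$, in which case $t_f(j_\ell)=f(v_{j_\ell})$ is produced in $O(1)$ time by the last bullet in the definition of $\mathcal{F}$. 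Hence the arguments to $Q_i$ are available at no asymptotic extra cost, and by hypothesis $Q_i$ runs in $T_i(d_i)$ time on an input of length $d_i$.

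Finally I would sum these per-iteration costs. Summing $T_i(d_i)$ over $i\in\{1,\dots,\hat{n}\}=\{1,\dots,n\}\setminus L$ yields the claimed bound $T^1=\sum_{i\in\{1,\dots,n\}\setminus L} T_i(d_i)$. The only subtle point is really the cost of retrieving $t_f(j)$ for sinks; since the lemma takes the $Q_i$ running times as a black-box hypothesis and treats the sinks' values as constant-time primitives, there is no genuine obstacle here, and the proof is essentially bookkeeping.
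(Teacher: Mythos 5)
Your proof is correct and follows essentially the same route as the paper's: observe that the arguments of each $Q_i$ are already available (by topological order for non-sinks, in constant time for sinks), charge $T_i(d_i)$ to each non-sink vertex, and sum over $i\in\{1,\dots,n\}\setminus L$. Your version just spells out the bookkeeping slightly more explicitly.
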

\begin{proof}
Note, that when we compute $t[i]$, we already have computed $t_f(j_1),\dots,t_f(j_{d_i})$ or can compute them in constant time because for all $e=(v_i,v_j)\in E$ we have $i<j$. 

A complexity of a processing a vertex $v_i$ is $T_i(d_i)$, where $i \in \{1, \dots , n\}\backslash L$.

The algorithm process vertices one by one. Therefore 
\[T^1 = \sum\limits_{i\in\{1,\dots,n\}\backslash L}
T_i(d_i).\]
\Endproof
\end{proof}

Note, that quantum algorithms have a probabilistic behavior. Let us compute an error probability for
Algorithm \ref{alg:general-dp}.

\begin{lemma}\label{lm:dp-err}
Suppose the quantum algorithm $Q_i$ for the function $h_i$ has an error probability $\varepsilon(n)$, where $i\in\{1,\dots,n\}\backslash L$. Then the error probability of Algorithm \ref{alg:general-dp} is at most $1 - (1 - \varepsilon(n))^{\hat{n}}$.
\end{lemma}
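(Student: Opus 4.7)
The plan is to bound the failure probability of the whole algorithm by looking at the $\hat{n}$ independent calls to the subroutines $Q_i$ and controlling the event that all of them succeed.

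First I would observe, using the argument already given in Lemma \ref{lm:dp-time}, that Algorithm \ref{alg:general-dp} makes exactly one call to $Q_i$ for each $i \in \{1,\dots,n\}\setminus L$, and no other source of randomness enters the computation (the sink values $f(v_j)$ for $j > \hat{n}$ are produced classically in constant time and are therefore exact). Hence a sufficient condition for the final output $t[1]$ to equal $f(v_1)$ is that every one of these $\hat{n}$ quantum subroutines returns the correct value: if each $Q_i$ returns $h_i(f(u_1),\dots,f(u_{d_i}))$, then a straightforward induction on $i$ from $\hat{n}$ down to $1$ shows that $t[i] = f(v_i)$ for every non-sink vertex, where the induction step uses the topological-sort property so that the arguments fed to $Q_i$ are already correct when $Q_i$ is invoked.

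Next I would quantify this. Let $A_i$ denote the event that the $i$-th call to the subroutine returns the correct value; by assumption $\Pr[A_i] \ge 1-\varepsilon(n)$. Because each $Q_i$ is a separate invocation of a quantum algorithm (with its own fresh workspace and measurement), these events are mutually independent, so
\[
\Pr\!\Bigl[\,\bigcap_{i\in\{1,\dots,n\}\setminus L} A_i\,\Bigr] \;=\; \prod_{i\in\{1,\dots,n\}\setminus L} \Pr[A_i] \;\ge\; (1-\varepsilon(n))^{\hat{n}},
\]
since $|\{1,\dots,n\}\setminus L| = \hat{n}$. Combining with the deterministic implication established in the previous paragraph, the probability that the algorithm outputs the correct answer is at least $(1-\varepsilon(n))^{\hat{n}}$, so the error probability is at most $1-(1-\varepsilon(n))^{\hat{n}}$, as claimed.

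The only mildly delicate point is justifying independence and the one-sided implication. If one preferred to avoid independence entirely, one could instead apply a union bound over the $\hat{n}$ subroutine calls to obtain the weaker estimate $\hat{n}\cdot\varepsilon(n)$; but since the statement explicitly asks for the product form $1-(1-\varepsilon(n))^{\hat{n}}$, the independence of the separate quantum runs is what I would rely on. There is no real obstacle here beyond being careful that a single failure may propagate to corrupt every subsequent $t[j]$, so only an all-success event is counted on the correct side.
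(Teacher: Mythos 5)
Your proposal is correct and follows essentially the same route as the paper: condition on the all-success event, use independence of the $\hat{n}$ subroutine invocations to lower-bound its probability by $(1-\varepsilon(n))^{\hat{n}}$, and take the complement. The only difference is that you spell out the induction showing that all-success implies a correct output, which the paper leaves implicit.
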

\begin{proof}
Let us compute a success probability for Algorithm \ref{alg:general-dp}. Suppose that all vertices are
computed with no error. The probability of this event is $(1 - \varepsilon(n))^{\hat{n}}$, because error of each invocation is independent event.

Therefore, the error probability for Algorithm \ref{alg:general-dp} is at most $1-(1-\varepsilon(n))^{\hat{n}}$, for $\hat{n}=n-|L|$. 
\Endproof
\end{proof}

For some functions and algorithms, we do not have a requirement that all arguments of $h$ should be computed with no error. In that case, we will get better error probability. This situation is discussed in Lemma \ref{lm:dp-max-compl}.

\subsection{Functions for Vertices Processing}
We can choose the following functions as a function $h$
\begin{itemize}
\item Conjunction ($AND$ function). For computing this function, we can use Grover's search
algorithm \cite{g96,bbht98} for searching $0$ among arguments. If the element that we found is $0$, then the result is $0$. If the element is $1$, then there is no $0$s, and the result is $1$.
\item Disjunction ($OR$ function). We can use the same approach, but here we search $1$s.
\item Sheffer stroke (Not $AND$ or $NAND$ function). We can use the same approach as for AND function, but here we search $1$s. If we
found $0$ then the result is $1$; and $0$, otherwise.
\item Maximum function ($MAX$). We can use the D{\"u}rr and H{\o}yer maximum search algorithm \cite{dh96}.
\item Minimum function ($MIN$). We can use the same algorithm as for maximum.
\item Other functions that have quantum algorithms.
\end{itemize}

As we discussed before, $AND$, $OR$ and $NAND$ functions can be computed using Grover search algorithm. Therefore algorithm for these functions on vertex $v_i$ has an error $\varepsilon_i\leq 0.5$ and running time is $T(d_i)=O(\sqrt{d_i})$, for  $i\in \{1,\dots,n\}\backslash L$. These results follow from \cite{bhmt2002,gr2005,bbht98,g96} We have the similar situation for computing maximum and minimum functions \cite{dh96}.

If we use these algorithms in Algorithm \ref{alg:general-dp} then we obtain the error probability  $1-(0.5)^{\hat{n}}$ due to Lemma \ref{lm:dp-err}.

At the same time, the error is one side. That is why we can apply the boosting technique to reduce the error probability. The modification is presented in Algorithm \ref{alg:bost-qalgo}.

Let us consider $MAX$ and $MIN$ functions. Suppose, we have a quantum algorithm $Q$ and a deterministic algorithm $A$ for a function $h\in\{MAX,MIN\}$. Let $k$ be number of algorithm's invoking according to the boosting technique for reducing the error probability. The number $k$ is integer and $k\geq 1$. Let $(x_1,\dots,x_d)$ be arguments (input data) of size $d$.

Let us denote it as $\hat{Q}^k(x_1,\dots,x_d)$. Suppose, we have a temporary array $b=(b[1],\dots,b[k])$.
\begin{algorithm}
\caption{The application of the boosting technique to Quantum Algorithm.}\label{alg:bost-qalgo}
\begin{algorithmic}
\For{$z=1\dots k$}
  \State $b[z] \gets Q(x_1,\dots,x_d)$
\EndFor
\State $result \gets  A_i\left(b[1],\dots, b[k]\right)$
\State \Return $result$
\end{algorithmic}
\end{algorithm}

If we analyze the algorithm, then we can see that it has the following property:
\begin{lemma} \label{lm:boosting}
 Let $(x_1,\dots,x_d)$ be an argument (input data) of size $d$, for a function $h(x_1,\dots,x_d)\in\{MAX,MIN\}$. Let $k$ be a number of algorithm's invoking. The number $k$ is integer and $k\geq 1$.   Then the expected running time of the boosted version $\hat{Q}^k(x_1,\dots,x_d)$ of the quantum algorithm $Q(x_1,\dots,x_d)$ is $O\left(k\cdot \sqrt{d}\right)$ and the error probability is $O(1/2^{k})$.
\end{lemma}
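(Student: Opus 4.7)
The plan is to handle the running time and the error probability separately, since the two claims are independent consequences of the structure of the boosted procedure.

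For the running time, I would observe that $\hat{Q}^k$ is the sequential composition of $k$ independent calls to $Q$, followed by a single deterministic pass $A$ over an array of $k$ candidates. The D\"urr--H\o yer search used inside $Q$ has expected cost $O(\sqrt{d})$ on an input of size $d$, so by linearity of expectation the $k$ quantum calls contribute $O(k\sqrt{d})$ in expectation. The deterministic aggregation $A$ is a linear scan over $k$ elements, costing $O(k)$, which is dominated by $O(k\sqrt{d})$. Hence the total expected running time is $O(k\sqrt{d})$.

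For the error probability, I would exploit the one-sided nature of MAX/MIN. Each $b[z]$ produced by $Q$ is, by construction, one of the input values $x_1,\dots,x_d$. Consequently, for $h=\mathrm{MAX}$, the deterministic maximum over $\{b[1],\dots,b[k]\}$ is at most the true maximum of $\{x_1,\dots,x_d\}$, with equality as soon as at least one $b[z]$ equals the true maximum; the symmetric statement holds for $h=\mathrm{MIN}$. Since D\"urr--H\o yer succeeds with probability at least $1/2$ and the $k$ invocations are independent, the probability that every single $b[z]$ fails to equal the true extremum is at most $(1/2)^k$. Therefore the overall error probability is $O(1/2^k)$.

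The only delicate point, and the step I expect to require a sentence of care, is justifying that a single correct answer among the $b[z]$'s already suffices, i.e.\ that the deterministic aggregation $A$ can never corrupt a correct candidate. Once the observation that $b[z]\in\{x_1,\dots,x_d\}$ is in place, this is immediate from the monotonicity of $\max$ (resp.\ $\min$) under inclusion. Everything else is routine: independence of repeated trials for the error bound, linearity of expectation for the time bound, and no quantum amplitude-amplification machinery is required beyond the already-cited guarantee for $Q$.
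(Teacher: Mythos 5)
Your proposal is correct and follows essentially the same route as the paper: expected time $O(k\sqrt{d})$ from $k$ independent D\"urr--H\o yer calls plus an $O(k)$ deterministic aggregation, and error $O(1/2^k)$ because a single successful invocation suffices when the candidates are combined by $\max$ (resp.\ $\min$). Your explicit justification that each $b[z]$ lies in $\{x_1,\dots,x_d\}$, so the aggregation can never overshoot the true extremum, is a welcome sharpening of the paper's one-line remark that ``the algorithm is successful if at least one invocation is successful.''
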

\begin{proof}
Due to \cite{dh96}, the expected running time of the algorithm $Q$ is $O\left(\sqrt{d}\right)$ and the error probability is at most $0.5$. We repeat the algorithm $k$ times and then apply $A$ (function $MAX$ or $MIN$) that has running time $O(k)$. Therefore, the expected running time is $O\left(k\sqrt{d}\right)$. The algorithm is successful if at least one invocation is successful because we apply $A$. Therefore, the error probability is $O\left(1/2^{k}\right)$.
\Endproof
\end{proof}

At the same time, if an algorithm $Q$ measures only in the end, for example Grover Algorithm; then we can apply the amplitude amplification algorithm \cite{bhmt2002} that boosts the quantum algorithm in running time $O(\sqrt{k})$. The amplitude amplification algorithm is the generalization of the Grover's search algorithm. So, in the case of $AND, OR$ and $NAND$ functions, we have the following property: 

\begin{lemma} \label{lm:boosting-q}
 Let $(x_1,\dots,x_d)$ be an argument (input data) of size $d$ for some function $h(x_1,\dots,x_d)\in\{AND,OR,NAND\}$. Suppose, we have a quantum algorithm $Q$ for $h$. Let $k$ be planned number of algorithm's invoking in a classical case. The number $k$ is integer and $k\geq 1$. Then the running time of the boosted version $\hat{Q}^k(x_1,\dots,x_d)$ of the quantum algorithm $Q(x_1,\dots,x_d)$ is $O\left(\sqrt{k\cdot d}\right)$ and the error probability is $O\left(1/2^{k}\right)$.
\end{lemma}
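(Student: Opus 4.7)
The plan is to reduce the three cases to a single underlying Grover-style search problem and then apply amplitude amplification instead of naive independent repetition. First I would observe that each of $AND$, $OR$, $NAND$ can be decided by searching the input for a distinguished symbol: a $0$ for $AND$ and $NAND$, and a $1$ for $OR$. As already noted in the text preceding the lemma, the quantum algorithm $Q$ for $h$ is just Grover's search applied to this predicate, so a single invocation runs in $O(\sqrt{d})$ time, has one-sided error, and, crucially, performs its measurement only at the very end of the circuit.

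Next I would invoke amplitude amplification \cite{bhmt2002}, which is exactly designed for this situation: given a quantum subroutine that produces a correct answer with probability at least some constant $p_0$ and whose only measurement is at the end, one can construct a new quantum routine that produces a correct answer with probability at least $1-(1-p_0)^{\Theta(k)}$ using only $O(\sqrt{k})$ invocations of the original circuit and its inverse. Applied to $Q$, which succeeds with probability at least $1/2$ on each run, this yields success probability $1-O(1/2^k)$ after $O(\sqrt{k})$ amplification steps. Since each invocation of $Q$ takes $O(\sqrt{d})$ time, the total running time of $\hat{Q}^k$ is $O(\sqrt{k})\cdot O(\sqrt{d})=O(\sqrt{kd})$, matching the bound in the statement.

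The main obstacle I anticipate is justifying that amplitude amplification really applies here and improves over the naive $O(k\sqrt{d})$ bound that the classical boosting of Lemma \ref{lm:boosting} would give. This reduces to two points: (i) that $Q$ is unitary until the final measurement, which is true for Grover's algorithm and its BBHT variant \cite{bbht98,g96} used for $AND$, $OR$, $NAND$, and (ii) that the success probability of $Q$ is bounded below by a constant so that amplitude amplification's quadratic speed-up in the number of repetitions genuinely gives an $O(\sqrt{k})$ factor rather than depending badly on the unknown success probability; this is handled by the standard ``exact amplitude amplification'' variant, or by first amplifying to a fixed constant probability using a known lower bound. Once both points are in place, the time and error bounds follow immediately.
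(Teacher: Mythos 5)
Your proposal matches the paper's own proof: both note that $Q$ is Grover's search with one-sided error at most $1/2$ and measurement only at the end, and both then apply amplitude amplification \cite{bhmt2002} to replace $k$ independent repetitions by $O(\sqrt{k})$ amplification rounds, giving running time $O(\sqrt{kd})$ and error $O(1/2^k)$. Your discussion of the two technical prerequisites (unitarity until the final measurement and a constant lower bound on the success probability) is more explicit than the paper's one-line argument, but the route is the same.
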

\begin{proof}
Due to \cite{bhmt2002,gr2005,bbht98,g96}, the running time of the algorithm $Q$ is at most $O\left(\sqrt{d}\right)$ and the error probability is at most $0.5$. We apply amplitude amplification algorithm \cite{bhmt2002} to the algorithm and gets  the claimed running time and the error probability
\Endproof
\end{proof}

Let us apply the previous two lemmas to Algorithm \ref{alg:general-dp} and functions from the set  $\{AND,OR,NAND,MAX,MIN\}$.

\begin{lemma}\label{lm:dp-general-complx}

Suppose that a problem $P$ on a DAG $G=(V,E)$ has a dynamic programming algorithm such that functions  $h_i\in\{AND,OR,NAND,MAX,MIN\}$, for $i\in\{1,\dots,\hat{n}\}$. Then there is a quantum dynamic programming algorithm $A$ for the problem $P$ that has running time $O(\sqrt{\hat{n}m}\log \hat{n})=O(\sqrt{nm}\log n)$ and error probability $O(1/\hat{n})$. Here $m=|E|, n=|V|, \hat{n}=n-|L|$, $L$ is the set of sinks.
\end{lemma}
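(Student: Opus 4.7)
The plan is to instantiate Algorithm~\ref{alg:general-dp} with the boosted subroutines from Lemmas~\ref{lm:boosting} and \ref{lm:boosting-q}, with the boosting parameter $k$ chosen just large enough that a union bound over the $\hat{n}$ processed vertices yields total error $O(1/\hat{n})$, and then to bound the running time by Lemma~\ref{lm:dp-time} together with Cauchy--Schwarz.

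First, I would fix $k=\lceil c\log \hat{n}\rceil$ for a suitable constant $c$. By Lemma~\ref{lm:boosting} (for $MAX$/$MIN$) and Lemma~\ref{lm:boosting-q} (for $AND$/$OR$/$NAND$), this makes the error probability of each $\hat{Q}_i^k$ at most $O(1/\hat{n}^2)$. Plugging $\varepsilon(\hat{n})=O(1/\hat{n}^2)$ into Lemma~\ref{lm:dp-err} and using $(1-x)^{\hat{n}}\geq 1-\hat{n}x$ for small $x$, the total error probability of the modified Algorithm~\ref{alg:general-dp} is at most
\[
1-\bigl(1-O(1/\hat{n}^2)\bigr)^{\hat{n}}=O(1/\hat{n}),
\]
as required.

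Next, for the running time I would compare the two cost regimes. For $MAX$/$MIN$ vertices the boosted cost is $T_i(d_i)=O(k\sqrt{d_i})=O(\sqrt{d_i}\log \hat{n})$; for $AND$/$OR$/$NAND$ vertices it is $T_i(d_i)=O(\sqrt{k d_i})=O(\sqrt{d_i\log \hat{n}})$, which is no larger. Taking the worst of the two bounds and invoking Lemma~\ref{lm:dp-time}, the total expected running time is
\[
T^1=\sum_{i\in\{1,\dots,n\}\setminus L} T_i(d_i)=O\!\left(\log \hat{n}\,\sum_{i=1}^{\hat{n}}\sqrt{d_i}\right).
\]
Applying the Cauchy--Schwarz inequality $\sum_{i=1}^{\hat{n}}\sqrt{d_i}\leq \sqrt{\hat{n}}\sqrt{\sum_{i=1}^{\hat{n}} d_i}\leq \sqrt{\hat{n} m}$ (since the out-degrees sum to at most $m$) yields the claimed $O(\sqrt{\hat{n}m}\log \hat{n})$ bound, and the equivalence with $O(\sqrt{nm}\log n)$ follows from $\hat{n}\leq n$.

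The main obstacle I anticipate is calibrating the boosting parameter: $k$ must be large enough that the union bound over $\hat{n}$ vertices still gives error $O(1/\hat{n})$, yet small enough not to inflate the running time beyond a single $\log \hat{n}$ factor. The $k=\Theta(\log \hat{n})$ choice above is exactly the balance point, and the Cauchy--Schwarz step is what converts the vertex-by-vertex $\sqrt{d_i}$ costs into the global $\sqrt{\hat{n} m}$ factor; otherwise the argument is a direct composition of the results already established earlier in this section.
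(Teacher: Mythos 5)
Your proposal is correct and follows essentially the same route as the paper's own proof: boost each subroutine with $k=\Theta(\log\hat{n})$ repetitions to drive the per-vertex error down to $O(1/\hat{n}^2)$, apply Lemma~\ref{lm:dp-err} for the overall error (your Bernoulli-inequality step is a slightly cleaner justification than the paper's limit computation), and combine Lemma~\ref{lm:dp-time} with Cauchy--Schwarz to obtain the $O(\sqrt{\hat{n}m}\log\hat{n})$ running time. No gaps.
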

\begin{proof}

Let us choose $k=2\log_2 \hat{n}$ in Lemmas \ref{lm:boosting},\ref{lm:boosting-q}. Then the error probabilities for the algorithms $Q^{2\log_2 \hat{n}}$ are $O\left(0.5^{2\log_2 \hat{n}}\right)=O\left(1/n^2\right)$. The  running time is $O(\sqrt{d_i\log \hat{n}})=O(\sqrt{d_i}\log \hat{n})$ for $h_i\in\{AND,OR,NAND\}$ and  $O(\sqrt{d_i}\log \hat{n})$ for $h_i\in\{MAX,MIN\}$.

Due to Lemma \ref{lm:dp-err}, the probability of error is at most $\varepsilon(\hat{n})= 1-\left(1-\frac{1}{\hat{n}^2}\right)^{\hat{n}}$. Note that 
\[\lim\limits_{\hat{n}\to \infty}\frac{\varepsilon(\hat{n})}{1/\hat{n}}= \lim\limits_{\hat{n}\to \infty} \frac{1-\left(1-\frac{1}{\hat{n}^2}\right)^{\hat{n}}}{1/\hat{n}}=1;\]
 Hence, $\varepsilon(\hat{n})=O(1/\hat{n})$.

Due to Lemma  \ref{lm:dp-time}, the running time is
\[T^1=\sum\limits_{i\in\{1,\dots,n\}\backslash L}
T_i(d_i) \leq \sum\limits_{i\in\{1,\dots,n\}\backslash L}
O\left(\sqrt{d_i}\log \hat{n}\right)  =O\left((\log_2 \hat{n})\cdot\sum\limits_{i\in\{1,\dots,n\}\backslash L}
\sqrt{d_i}\right).\]
Due to the Cauchy-Bunyakovsky-Schwarz inequality, we have 
\[\sum\limits_{i\in\{1,\dots,n\}\backslash L} \sqrt{d_i}\leq \sqrt{\hat{n}\sum\limits_{i\in\{1,\dots,n\}\backslash L}d_i}\]
Note that $d_i=0$, for $i\in L$. Therefore, $\sum\limits_{i\in\{1,\dots,n\}\backslash L}d_i=\sum\limits_{i\in\{1,\dots,n\}}d_i=m$, because $m=|E|$ is the total number of edges. Hence,
\[\sqrt{\hat{n}\sum\limits_{i\in\{1,\dots,n\}\backslash L}d_i}=\sqrt{\hat{n}\sum\limits_{i\in\{1,\dots,n\}}d_i}=\sqrt{\hat{n}m}.\] Therefore, $T_1\leq O(\sqrt{\hat{n}m}\log \hat{n})= O(\sqrt{nm}\log n)$.   
\Endproof
\end{proof}

If $h_i\in\{AND,OR,NAND\}$, then we can do a better estimation of the running time. 

\begin{lemma}\label{lm:dp-bool-complx}
Suppose that a problem $P$ on a DAG $G=(V,E)$ has a dynamic programming algorithm such that functions  $h_i\in\{AND,OR,NAND\}$, for $i\in\{1,\dots,\hat{n}\}$. Then there is a quantum dynamic programming algorithm $A$ for the problem $P$ that has running time $O(\sqrt{\hat{n}m\log \hat{n}})=O(\sqrt{nm\log n})$ and error probability $O(1/\hat{n})$. Here $m=|E|, n=|V|, \hat{n}=n-|L|$, $L$ is the set of sinks.
\end{lemma}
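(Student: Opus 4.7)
The plan is to repeat the structure of the proof of Lemma \ref{lm:dp-general-complx}, but exploit the sharper boosting estimate available for $AND$, $OR$, $NAND$ from Lemma \ref{lm:boosting-q}. The point is that amplitude amplification lets us boost the per-vertex algorithm to error $O(1/\hat n^2)$ at cost $O(\sqrt{k\cdot d_i})$ rather than $O(k\sqrt{d_i})$, so the logarithmic factor ends up inside the square root.

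First I would pick $k = 2\log_2\hat n$ in Lemma \ref{lm:boosting-q}. This yields, for each non-sink vertex $v_i$, a quantum subroutine $\hat Q_i^{k}$ computing $h_i$ with error probability $O(1/\hat n^2)$ and running time $T_i(d_i) = O(\sqrt{d_i\log\hat n})$. Plugging these $\hat Q_i^{k}$ into Algorithm \ref{alg:general-dp} and applying Lemma \ref{lm:dp-err} gives total error probability at most $1 - (1 - 1/\hat n^2)^{\hat n} = O(1/\hat n)$, exactly as in the proof of Lemma \ref{lm:dp-general-complx}.

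For the running time I would apply Lemma \ref{lm:dp-time} and factor the $\sqrt{\log\hat n}$ out of the sum:
\[
T^1 \;=\; \sum_{i \in \{1,\dots,n\}\setminus L} T_i(d_i) \;=\; O\!\left(\sqrt{\log\hat n}\,\sum_{i \in \{1,\dots,n\}\setminus L}\sqrt{d_i}\right).
\]
Then I would invoke the Cauchy--Bunyakovsky--Schwarz inequality exactly as before, together with the identity $\sum_{i\notin L} d_i = \sum_i d_i = m$, to obtain $\sum_{i\notin L}\sqrt{d_i} \le \sqrt{\hat n\, m}$, and therefore $T^1 = O(\sqrt{\hat n m\log\hat n})$.

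I do not expect any genuine obstacle here: the only substantive change from Lemma \ref{lm:dp-general-complx} is invoking Lemma \ref{lm:boosting-q} in place of Lemma \ref{lm:boosting}, which moves the $\log\hat n$ from a prefactor into the square root. The Cauchy--Schwarz step and the error analysis carry over verbatim.
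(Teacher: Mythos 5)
Your proposal is correct and follows exactly the route the paper intends: the paper's own proof of this lemma simply says to repeat the argument of Lemma \ref{lm:dp-general-complx} with the per-vertex cost $O(\sqrt{d_i\log\hat{n}})$ from the amplitude-amplification bound of Lemma \ref{lm:boosting-q}, which is precisely what you do. Your write-up is in fact more explicit than the paper's one-line proof, and the Cauchy--Bunyakovsky--Schwarz step and error analysis carry over as you claim.
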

\begin{proof}
The proof is similar to the proof of Lemma \ref{lm:dp-general-complx}, but we use $O(\sqrt{d_i\log \hat{n}})$ as time complexity of processing a vertex.
\Endproof
\end{proof}

If $h_i\in\{MAX,MIN\}$, then we can do a better estimation of the running time
\begin{lemma}\label{lm:dp-max-compl}
Suppose that a problem $P$ on a DAG $G=(V,E)$ has a dynamic programming algorithm such that functions  $h_i\in\{MAX, MIN\}$, for $i\in\{1,\dots,\hat{n}\}$ and the solution is $f(v_a)$ for some $v_a\in V$. Then there is a quantum dynamic programming algorithm $A$ for the problem $P$ that has expected running time $O(\sqrt{\hat{n}m}\log q)=O(\sqrt{nm}\log n)$ and error probability $O(1/q)$, where $q$ is the length of path to the farthest vertex from the vertex $v_a$. Here $m=|E|, n=|V|, \hat{n}=n-|L|$, $L$ is the set of sinks.
\end{lemma}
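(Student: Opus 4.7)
The plan is to mirror the analysis of Lemma \ref{lm:dp-general-complx} but to tune the boosting parameter to the path length $q$ rather than to $\hat{n}$, exploiting a structural property of $MAX$ and $MIN$. I would invoke Algorithm \ref{alg:bost-qalgo} at each vertex with $k=2\log_2 q$, so by Lemma \ref{lm:boosting} every vertex has expected running time $O(\sqrt{d_i}\log q)$ and error probability $O(1/q^2)$. Summing across $i\notin L$ and applying the Cauchy-Bunyakovsky-Schwarz inequality together with $\sum_{i\notin L} d_i=m$ reproduces the calculation of Lemma \ref{lm:dp-general-complx} and yields the claimed total expected running time $O(\sqrt{\hat{n}m}\log q)$.

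The sharper error bound rests on the following observation: the D\"urr and H\o yer subroutine returns an \emph{index} into its input array, so $t[i]$ is always set equal to $t[j_l]$ for some out-neighbor $v_{j_l}$ of $v_i$ even when the quantum search errs. By induction on topological order, with the base case that sinks are computed exactly, $t[i]$ must therefore equal $f(v_{i'})$ for some descendant $v_{i'}$ of $v_i$. When $h_i$ is $MAX$ throughout, this forces $t[i]\leq f(v_a)$ for every descendant of $v_a$.

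Next I would fix any true winning path $v_a=v_{p_0}\to v_{p_1}\to\cdots\to v_{p_r}$ ending at a sink with $f(v_{p_0})=\cdots=f(v_{p_r})$; such a path exists because the recursion is a $MAX$, and its length satisfies $r\leq q$ by the definition of $q$. If the $MAX$ invocation succeeds at each of $v_{p_0},\dots,v_{p_{r-1}}$, then a backward induction from the sink yields $t[p_l]=f(v_a)$ for every $l$: the value $t[p_{l+1}]$ already equals $f(v_a)$ and is one of the inputs at $v_{p_l}$, while every other input is at most $f(v_a)$ by the structural bound above, so a correctly computed maximum returns $f(v_a)$. A union bound over the at most $q$ vertices on this \emph{single} path then gives overall failure probability at most $q\cdot O(1/q^2)=O(1/q)$; the $MIN$ case is symmetric.

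The step I expect to require the most care is precisely the structural claim that a winning path of length at most $q$ exists and that the union bound can be restricted to that one path rather than to all $\hat{n}$ non-sinks: this is the mechanism that lets both the logarithm in the running time and the denominator in the error probability depend on $q$ instead of $\hat{n}$, and it is what separates this lemma from Lemma \ref{lm:dp-general-complx}.
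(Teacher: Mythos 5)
Your proposal is correct and follows essentially the same route as the paper: boost each D\"urr--H\o yer call with $k=2\log_2 q$, observe that errors are one-sided (a failed $MAX$ call only returns a smaller value), and conclude that only the at most $q$ vertices along the optimal path out of $v_a$ need to succeed, giving error $O(1/q)$ and running time $O(\sqrt{\hat{n}m}\log q)$ via Cauchy--Bunyakovsky--Schwarz. Your explicit induction on topological order justifying that $t[i]$ always equals $f$ of some descendant is a welcome sharpening of a step the paper states more informally, but it is not a different argument.
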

\begin{proof}
Let $Q$ be the D{\"u}rr-H{\o}yer quantum algorithm for $MAX$ or $MIN$ function. Let $\hat{Q}^q$ from Algorithm \ref{alg:bost-qalgo} be the boosted version of $Q$. Let us analyze the algorithm.

Let us consider a vertex $v_i$ for $i\in\{1,\dots,n\}\backslash L$. When we process $v_i$, we should compute $MAX$ or $MIN$ among $t_f(j_1),\dots,t_f(j_{d_i})$. Without limit of generalization we can say that we compute MAX function. Let $r$ be an index of maximal element. It is required to have no error for computing $t[j_r]$. At the same time, if we have an error on processing $v_{j_w}$, $w\in\{1,\dots,d_i\}\backslash\{r\}$; then we get value $t[j_w]<f(v_{j_w})$. In that case, we still have $t[j_r]>t[j_w]$. Therefore, an error can be on processing of any vertex $v_{j_w}$.

Let us focus on the vertex $v_a$. For computing $f(v_a)$ with no error, we should compute $f(v_{a_1})$ with no error. Here $v_{a_1}\in D_a$ such that maximum is reached on $v_{a_1}$. For computing $f(v_{a_1})$ with no error, we should compute $f(v_{a_2})$ with no error. Here $v_{a_2}\in D_{a_1}$ such that maximum is reached on $v_{a_2}$ and so on. Hence, for solving problem with no error, we should process only at most  $q$ vertices with no error.  

Therefore, the probability of error for the algorithm is
\[1-\left(1-\left(\frac{1}{2}\right)^{2\log q}\right)^{q}=O\left(\frac{1}{q}\right)\textrm{ because } \lim_{q\to\infty}\frac{1-\left(1-\frac{1}{q^2}\right)^{q}}{1/q}=1.\]
\Endproof
\end{proof}
\section{Quantum Algorithms for Evolution of Boolean Circuits with Shared Inputs and Zhegalkin Polynomial}\label{sec:andor-dag}
Let us apply ideas of quantum dynamic programming algorithms on DAGs to AND-OR-NOT DAGs. 

It is known that any Boolean function can be represented as a Boolean circuit with AND, OR and NOT gates \cite{y2003,ab2009}. Any such circuit can be represented as a DAG with the following properties:
\begin{itemize}
\item sinks are labeled with variables. We call these vertices ``variable-vertices''.
\item There is no vertices $v_i$ such that $d_i=1$.
\item If a vertex $v_i$ such that $d_i\geq 2$; then the vertex labeled with Conjunction or Disjunction.  We call these vertices ``function-vertices''.
\item Any edge is labeled with $0$ or $1$.
\item There is one particular root vertex $v_s$.
\end{itemize}

The graph represents a Boolean function that can be evaluated in the following way. 
We associate a value $r_i\in\{0,1\}$ with a vertex $v_i$, for $i\in\{1,\dots,n\}$. If $v_i$ is a variable-vertex, then $r_i$ is a value of a corresponding variable. If $v_i$ is a function-vertex labeled by a function $h_i\in\{AND, OR\}$, then $r_i=h_i\left(r_{j_1}^{\sigma(i,j_1)},\dots,r_{j_{w}}^{\sigma(i,j_w)}\right)$, where $w=d_i$, $(v_{j_1},\dots,v_{j_w})=D_i$, $\sigma(i,j)$ is a label of an edge $e=(i,j)$. Here,  we say that $x^1=x$ and $x^0=\neg x$ for any Boolean variable $x$. The result of the evolution is $r_s$.

An AND-OR-NOT DAG can be evaluated using the following algorithm that is a modification of Algorithm \ref{alg:general-dp}:

Let $r=(r_1,\dots,r_n)$ be an array which stores results of functions $h_i$. Let a variable-vertex $v_i$ be labeled by $x(v_i)$, for all $i\in L$. Let $Q_i$ be a quantum algorithm for $h_i$; and $\hat{Q}_i^{2\log_2 \hat{n}}$ be a boosted version of $Q_i$ using amplitude amplification (Lemma \ref{lm:boosting-q}).  Let $t_f(j)$ be a function such that $t_f(j)=r_j$, if $j\leq \hat{n}$; $t_f(j)=x(v_j)$, if $j> \hat{n}$. 
\begin{algorithm}
\caption{Quantum Algorithm for AND-OR-NOT DAGs evaluation.}\label{alg:and-or}
\begin{algorithmic}
\For{$i=\hat{n}\dots s$}
  \State $t[i] \gets \hat{Q}_i^{2\log_2 \hat{n}}(t_f(j_1)^{\sigma(i,j_1)},\dots,t_f(j_w)^{\sigma(i,j_w)})$, where $w=d_i$, $(v_{j_1},\dots,v_{j_{w}})=D_i$.
\EndFor
\State \Return $t[s]$
\end{algorithmic}
\end{algorithm}

Algorithm \ref{alg:and-or} has the following property:
\begin{theorem}\label{th:and-or-dag}
Algorithm \ref{alg:and-or} evaluates a AND-OR-NOT DAG $G=(V,E)$ with running time $O(\sqrt{\hat{n}m\log \hat{n}})=O(\sqrt{nm\log n})$ and error probability $O(1/\hat{n})$. Here $m=|E|, n=|V|, \hat{n}=n-|L|$, $L$ is the set of sinks.
\end{theorem}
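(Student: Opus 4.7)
The plan is to specialize the argument of Lemma \ref{lm:dp-bool-complx} to this setting. Every function-vertex is labelled by AND or OR, so the transition functions $h_i$ lie in $\{AND, OR\} \subseteq \{AND, OR, NAND\}$ and the machinery of Lemma \ref{lm:dp-bool-complx} applies directly. The only structural novelty here is the NOT label on edges, encoded by the exponents $\sigma(i,j) \in \{0,1\}$; I would observe that each such negation is a constant-time operation that can be absorbed into the oracle queried by Grover/amplitude amplification, so processing vertex $v_i$ with the boosted algorithm $\hat{Q}_i^{2\log_2 \hat{n}}$ of Lemma \ref{lm:boosting-q} still runs in time $O(\sqrt{d_i \log \hat{n}})$ with error probability $O(1/\hat{n}^2)$.

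Next I would invoke Lemma \ref{lm:dp-err} to bound the total error probability by $1 - (1 - 1/\hat{n}^2)^{\hat{n}}$, which by the same limit computation used in the proof of Lemma \ref{lm:dp-general-complx} is $O(1/\hat{n})$. For the running time I would sum per-vertex costs:
\[T \;=\; \sum_{i \in \{1,\dots,n\} \backslash L} O\bigl(\sqrt{d_i \log \hat{n}}\bigr) \;=\; O\bigl(\sqrt{\log \hat{n}}\bigr)\cdot \sum_{i \in \{1,\dots,n\} \backslash L} \sqrt{d_i},\]
and then apply the Cauchy-Bunyakovsky-Schwarz inequality together with $\sum_i d_i = m$ (using that $d_i = 0$ for $i \in L$) to obtain $\sum \sqrt{d_i} \leq \sqrt{\hat{n}\,m}$. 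Combining the two bounds yields $T = O(\sqrt{\hat{n}\,m\,\log \hat{n}})$, which matches the claim.

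I do not expect a serious obstacle; the proof is essentially a specialisation of Lemma \ref{lm:dp-bool-complx}. The one modelling point worth stating explicitly is that the per-edge negations $\sigma(i,j)$ are folded into the oracle accessed by the amplitude-amplification subroutine, so they do not add to the query cost. A minor book-keeping remark is that the loop of Algorithm \ref{alg:and-or} runs from $\hat{n}$ down to $s$ rather than down to $1$, but this only shrinks the set of processed vertices and hence cannot worsen either bound. Once these two observations are recorded, the conclusion follows by direct reuse of Lemmas \ref{lm:boosting-q}, \ref{lm:dp-err}, and the Cauchy-Schwarz step from the proof of Lemma \ref{lm:dp-bool-complx}.
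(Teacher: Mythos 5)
Your proposal is correct and follows essentially the same route as the paper: both reduce the theorem to the general dynamic-programming analysis (Lemma \ref{lm:dp-bool-complx} and its Cauchy--Bunyakovsky--Schwarz step) after observing that the edge negations $\sigma(i,j)$ are absorbed into the oracle at constant cost, leaving the per-vertex running time and error of $\hat{Q}_i^{2\log_2 \hat{n}}$ unchanged. The only difference is that you write out the summation and limit computations explicitly where the paper merely cites the earlier proofs.
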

\begin{proof}
Algorithm \ref{alg:and-or} evaluates the AND-OR-NOT DAG $G$ by the definition of AND-OR-NOT DAG for the Boolean function $F$.
Algorithm \ref{alg:and-or} is almost the same as Algorithm \ref{alg:general-dp}. The difference is labels of edges. At the same time, the Oracle gets information on edge in constant time. Therefore, the running time and the error probability of $\hat{Q}_i^{2\log_2 \hat{n}}$ does not change. Hence, using the proof similar to the proof of Lemma \ref{lm:boosting-q} we obtain the claimed running time and error probability.
\Endproof
\end{proof}

Another way of a Boolean function representation is a NAND DAG or Boolean circuit with NAND gates. \cite{y2003,ab2009}.
We can present NAND formula as a DAG with similar properties as AND-OR-NOT DAG, but function-vertices has only NAND labels. 
At the same time, if we want to use more operations, then we can consider NAND-NOT DAGs and NAND-AND-OR-NOT DAGs:
\begin{theorem}\label{th:nand-dag}
Algorithm \ref{alg:and-or} evaluates a NAND-AND-OR-NOT DAG and a NAND-NOT DAG. If we consider a DAG $G=(V,E)$, then these algorithms work with running time $O(\sqrt{\hat{n}m\log \hat{n}})=O(\sqrt{nm\log n})$ and error probability $O(1/\hat{n})$. Here $m=|E|, n=|V|, \hat{n}=n-|L|$, $L$ is the set of sinks.
\end{theorem}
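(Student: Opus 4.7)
The plan is to reduce the statement to an essentially verbatim repetition of the argument for Theorem \ref{th:and-or-dag}, since the only new ingredient is the presence of NAND-labeled function-vertices. First I would observe that NAND can be evaluated by exactly the same Grover-style subroutine as AND and OR: to compute $\mathrm{NAND}(y_1,\dots,y_d)$ it suffices to search for an index with $y_i=0$; if such an index is found the result is $1$, otherwise the result is $0$. Thus NAND admits a bounded-error quantum algorithm $Q_i$ with one-sided error at most $1/2$ and query cost $O(\sqrt{d_i})$, which is the same profile used in Lemma \ref{lm:boosting-q}. This places NAND in the same class as AND, OR for the purposes of the amplitude-amplified boosting, so $\hat Q_i^{2\log_2 \hat n}$ achieves error $O(1/\hat n^2)$ in time $O(\sqrt{d_i \log \hat n})$.

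Second, I would argue that Algorithm \ref{alg:and-or} is agnostic to the label of each function-vertex: it merely invokes the boosted quantum subroutine $\hat Q_i^{2\log_2 \hat n}$ associated with the vertex $v_i$. Because sinks, edge labels $\sigma(i,j)\in\{0,1\}$, and the out-neighbor list $D_i$ are handled identically to the AND-OR-NOT case, and the oracle applies any edge negation in constant extra time, the per-vertex analysis from Theorem \ref{th:and-or-dag} goes through unchanged when $h_i$ is allowed to be NAND. The case of a NAND-NOT DAG is the further restriction where function-vertices carry only NAND labels, so the bound is immediate from the NAND-AND-OR-NOT case.

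Third, I would assemble the global bounds exactly as before. By Lemma \ref{lm:dp-err} applied with per-vertex error $O(1/\hat n^2)$, the total error is
\[
1-\Bigl(1-\tfrac{1}{\hat n^2}\Bigr)^{\hat n}=O(1/\hat n).
\]
By Lemma \ref{lm:dp-time} the expected running time is $\sum_{i\notin L} O(\sqrt{d_i\log\hat n})=O\bigl((\log \hat n)^{1/2}\sum_{i\notin L}\sqrt{d_i}\bigr)$, and the Cauchy--Bunyakovsky--Schwarz inequality together with $\sum_i d_i=m$ yields
\[
\sum_{i\notin L}\sqrt{d_i}\ \le\ \sqrt{\hat n\, m},
\]
giving the claimed $O(\sqrt{\hat n m \log \hat n})=O(\sqrt{nm\log n})$ bound.

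I do not expect a real obstacle here; the only point that requires a brief verification is that the NAND subroutine truly falls under Lemma \ref{lm:boosting-q} (it does, being a single Grover search), and that allowing a mix of AND, OR, NAND at different function-vertices does not invalidate the per-vertex independence of errors used in Lemma \ref{lm:dp-err} (it does not, since each $\hat Q_i^{2\log_2\hat n}$ is an independent quantum procedure). Everything else is a verbatim re-run of the proof of Theorem \ref{th:and-or-dag}.
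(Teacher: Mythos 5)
Your proposal is correct and follows essentially the same route as the paper, which simply states that the proof is analogous to those of Lemma \ref{lm:dp-general-complx} and Theorem \ref{th:and-or-dag}; you have merely filled in the details the paper leaves implicit (NAND is a single Grover search with the same cost/error profile as AND and OR, so Lemma \ref{lm:boosting-q} applies and the per-vertex cost $O(\sqrt{d_i\log\hat n})$ sums to $O(\sqrt{\hat n m\log\hat n})$ via Cauchy--Bunyakovsky--Schwarz). No gaps.
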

\begin{proof}
The proof is similar to proofs of Lemma \ref{lm:dp-general-complx} and Theorem \ref{th:and-or-dag}.
\end{proof}

Theorems \ref{th:and-or-dag},\ref{th:nand-dag} present us quantum algorithms for Boolean circuits with shared input and non-constant depth. At the same time, existing algorithms \cite{avrz2010,a2007,a2010,bkt2018,ckk2012} are not applicable in a case of shared input and non-constant depth.

The third way of representation of Boolean function is Zhegalkin polynomial that is representation using $AND, XOR$ functions and the $0,1$ constants \cite{z1927,gg85,y2003}: for some integers $k,t_1,\dots,t_k$,
\[F(x)=ZP(x)=a\oplus\bigoplus_{i=1}^{k}C_i\mbox{, where }a\in\{0,1\}, C_i=\bigwedge_{z=1}^{t_i} x_{j_z}\]

At the same time, it can be represented as an AND-OR-NOT DAG with a logarithmic depth and shared input or an AND-OR-NOT tree with an exponential number of vertices and separated input. That is why the existing algorithms from \cite{avrz2010,a2007,a2010,bkt2018,ckk2012} cannot be used or work in exponential running time. 

\begin{theorem}\label{th:xor-dag}
Algorithm \ref{alg:and-or} evaluates the XOR-AND DAG $G=(V,E)$ with running time $O(\sqrt{\hat{n}m\log \hat{n}})=O(\sqrt{nm\log n})$ and error probability $O(1/\hat{n})$. Here $m=|E|, n=|V|, \hat{n}=n-|L|$, $L$ is the set of sinks.
\end{theorem}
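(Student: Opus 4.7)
The plan is to follow the template of Theorem~\ref{th:and-or-dag} almost verbatim, treating the two gate types separately. Correctness of Algorithm~\ref{alg:and-or} on an XOR-AND DAG follows by reverse-topological induction on $i$: when vertex $v_i$ is processed, all of its out-neighbours have either been computed in earlier iterations or are sinks whose value is read in constant time, so $\hat{Q}_i^{2\log_2 \hat{n}}$ receives correct inputs and returns $h_i$ applied to them with high probability.

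For the running time, the AND-vertices are handled exactly as in the AND-OR-NOT case: by Lemma~\ref{lm:boosting-q} the amplitude-amplified Grover search runs in $O(\sqrt{d_i\log \hat{n}})$ queries with error probability $O(1/\hat{n}^2)$. The XOR-vertices are the new ingredient, and here I would exploit the fact that parity admits no sub-linear quantum algorithm, so $Q_i$ is taken to be the exact classical XOR, costing $O(d_i)$ with zero error and making the boosting step vacuous. Without loss of generality every XOR-vertex may be assumed to have fan-in two, since an $s$-ary XOR can be rewritten as a chain of $s-1$ binary XORs; this changes $\hat{n}$ and $m$ only by constant factors, and each binary XOR then contributes $O(1)$ to the total running time.

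Summing over vertices and applying the Cauchy-Bunyakovsky-Schwarz inequality to the AND-contribution exactly as in Lemma~\ref{lm:dp-bool-complx} gives
\[
T \;\leq\; (\log \hat{n})\sum_{i:\,h_i=AND}\sqrt{d_i}\;+\;\sum_{i:\,h_i=XOR} O(1) \;\leq\; O\!\left(\sqrt{\hat{n}m\log \hat{n}}\right)+O(\hat{n}).
\]
The $O(\hat{n})$ term is absorbed because every non-sink vertex contributes at least one outgoing edge, so $m \geq \hat{n}$ and hence $\hat{n} \leq \sqrt{\hat{n}m}$. For the error probability, only AND-vertices are noisy, so Lemma~\ref{lm:dp-err} combined with $\lim_{\hat{n}\to\infty}(1-1/\hat{n}^2)^{\hat{n}} = 1$ yields an overall error of $O(1/\hat{n})$.

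The main obstacle is the asymmetry between AND and XOR: the Grover-style quadratic speed-up does not apply to parity, so XOR-vertices must be evaluated classically and a naive bound would contribute $\sum_{XOR}d_i$, which could be as large as $m$. The binarisation argument above is what rescues the claimed bound, and it is worth checking carefully that the constant-factor blow-up in $\hat{n}$ and $m$ from binarisation does not interact badly with the $\log \hat{n}$ and square-root factors in the final estimate.
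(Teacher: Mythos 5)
Your route is genuinely different from the paper's. The paper does not touch the evaluation procedure at all: it transforms the graph, replacing each XOR gate by a constant-size AND--OR--NOT gadget via $a\oplus b=(a\wedge\neg b)\vee(\neg a\wedge b)$ (two AND vertices, one OR vertex, six edges), observes that this gives $\hat{n}'\leq 3\hat{n}$ and $m'\leq 6m$, and then simply invokes Theorem~\ref{th:and-or-dag} on the transformed DAG. You instead keep the XOR vertices and evaluate them classically in place, which requires a separate accounting for the two gate types. Your observation that parity admits no sublinear quantum query algorithm is correct and is exactly why something must be done about XOR vertices; the question is whether your fix works.

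It does not, as stated. The WLOG step is where the proof breaks: rewriting an $s$-ary XOR as a chain of $s-1$ binary XORs does \emph{not} change $\hat{n}$ by a constant factor. It adds $s-2$ new non-sink vertices per gate, so the new vertex count is $\hat{n}+\sum_{i:\,h_i=XOR}(d_i-2)$, which can be $\Theta(m)$ (e.g.\ a single XOR vertex of fan-in $m$ becomes $m-1$ vertices). Consequently your term $\sum_{i:\,h_i=XOR}O(1)$ is really $O(m)$, not $O(\hat{n})$ --- equivalently, without binarisation the XOR vertices cost $\sum_{XOR}d_i$, which is again up to $\Theta(m)$. The absorption argument $\hat{n}\leq\sqrt{\hat{n}m}$ does not rescue an $O(m)$ term: $m\leq\sqrt{\hat{n}m\log\hat{n}}$ requires $m\leq\hat{n}\log\hat{n}$, which fails exactly in the dense regime $m\gg\hat{n}\log^2\hat{n}$ where the theorem is supposed to beat the classical $O(n+m)$ bound. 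To be fair, the paper's own gadget count ($\hat{n}'\leq 3\hat{n}$, $m'\leq 6m$) also tacitly assumes the XOR gates are binary, as they are in the Zhegalkin application of Corollary~\ref{cr:zp}; under that reading your argument is sound and somewhat more elementary (only the AND vertices need Lemma~\ref{lm:boosting-q} and the Cauchy--Bunyakovsky--Schwarz step, and only they contribute to the error). But you cannot both allow unbounded XOR fan-in and claim the binarisation is free; you must either restrict to binary XOR gates up front or state the bound in terms of the parameters of the binarised (or gadget-expanded) graph.
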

\begin{proof}
 $XOR$ operation is replaced by two $AND$, one $OR$ vertex and $6$ edges because for any Boolean $a$ and $b$ we have $a \oplus b = a \wedge \neg b \vee \neg a \wedge b$. So, we can represent the original DAG as an AND-OR-NOT DAG using $\hat{n}'\leq 3\cdot \hat{n} =O(\hat{n})$ vertices. The number of edges is $m'\leq 6\cdot m=O(m)$. Due to Theorem \ref{th:and-or-dag}, we can construct a quantum algorithm with running time $O(\sqrt{\hat{n}m\log \hat{n}})$ and error probability $O(1/\hat{n})$.
\Endproof
\end{proof}

The previous theorem shows us the existence of a quantum algorithm for Boolean circuits with XOR, NAND, AND, OR and NOT gates. Let us present the result for Zhegalkin polynomial.

\begin{corollary}\label{cr:zp}
Suppose that Boolean function $F(x)$ can be represented as Zhegalkin polynomial for some integers $k,t_1,\dots,t_k$: $F(x)=ZP(x)=a\oplus\bigoplus_{i=1}^{k}C_i$, where $a\in\{0,1\},$ $C_i=\bigwedge_{z=1}^{t_i} x_{j_z}$. Then, there is a quantum algorithm for $F$ with running time $O\left(\sqrt{k\log k(k+t_1+\dots+t_k)}\right)$ and error probability $O(1/k)$.
\end{corollary}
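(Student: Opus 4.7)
The plan is to reduce Zhegalkin polynomial evaluation directly to evaluating an XOR-AND DAG and then invoke Theorem \ref{th:xor-dag}. First I would construct the DAG $G$ from the formula $F(x)=a\oplus\bigoplus_{i=1}^{k}C_i$ as follows. For each conjunction $C_i=\bigwedge_{z=1}^{t_i}x_{j_z}$, introduce a single AND function-vertex with $t_i$ incoming edges from the variable-vertices that represent $x_{j_1},\dots,x_{j_{t_i}}$. For the outer XOR, arrange the $k$ values $C_1,\dots,C_k$ (together with the constant $a$, which is either absorbed into a single edge or attached as a trivial sink) into a combining structure of $O(k)$ binary XOR-vertices, for instance a balanced XOR tree of depth $\lceil\log_2 k\rceil$.

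Next I would count the vertices and edges of $G$. The non-sink vertices are the $k$ AND-vertices for the $C_i$'s together with the $O(k)$ XOR-vertices of the combining tree, giving $\hat{n}=O(k)$. The edges consist of $\sum_{i=1}^{k}t_i$ edges entering the AND-vertices from variable-vertices plus $O(k)$ edges internal to the XOR tree, giving $m=O(k+t_1+\cdots+t_k)$.

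Finally I would invoke Theorem \ref{th:xor-dag} on $G$. The theorem yields a quantum algorithm with running time
\[O\bigl(\sqrt{\hat{n}\,m\,\log\hat{n}}\bigr)=O\Bigl(\sqrt{k\log k\,(k+t_1+\cdots+t_k)}\Bigr)\]
and error probability $O(1/\hat{n})=O(1/k)$, which matches the claimed bounds.

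There is no real obstacle here, since the corollary is essentially a packaging of Theorem \ref{th:xor-dag}. The only minor point to verify is that the XOR-combining structure really contributes only $O(k)$ additional vertices and $O(k)$ additional edges (this holds for any binary tree on $k$ leaves) and that handling the constant $a$ does not change these asymptotics.
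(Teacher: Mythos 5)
Your proposal is correct and takes essentially the same route as the paper: both build an XOR-AND DAG with one AND-vertex per conjunction $C_i$, so that $\hat{n}=O(k)$ and $m=O(k+t_1+\dots+t_k)$, and then invoke Theorem \ref{th:xor-dag}. The only cosmetic difference is that the paper explicitly counts $k-1$ XOR gates, each expanded into two AND-vertices, one OR-vertex and six edges, whereas you combine the $C_i$ with a balanced binary XOR tree; both give the same asymptotics.
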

\begin{proof}
Let us present $C_i$ as one $AND$ vertex with $t_i$ outgoing edges. $XOR$ operation is replaced by two $AND$, one $OR$ vertex and $6$ edges. So, $m=6\cdot (k-1) + t_1 +\dots + t_k=O(k+t_1+\dots+t_k)$, $n=3\cdot (k-1) + k=O(k)$.
Due to Theorem \ref{th:xor-dag}, we obtain claimed properties.
\Endproof
\end{proof}
\section{The Quantum Algorithm for the Single Source Longest Path Problem for a Weighted DAG and the Diameter Search Problem for Unweighted DAG}\label{sec:diameter}

In this section, we consider two problems for DAGs.
\subsection{The Single Source Longest Path Problem for Weighted DAG}
Let us apply the approach to the Single Source Longest Path problem.

Let us consider a weighted DAG $G=(V,E)$ and the weight of an edge $e=(v_i,v_j)$ is $w(i,j)$, for $i,j\in \{1,\dots,n\}, e\in E$.

Let we have a vertex $v_s$ then we should compute $t[1],\dots,t[n]$. Here $t[i]$ is the length of the longest path from $v_s$ to $v_i$. If a vertex $v_i$ is not reachable from $v_s$ then $t[i]=-\infty$.

Let us present the algorithm for longest paths lengths computing. 

Let $t=(t[1],\dots,t[n])$ be an array which stores results for vertices. Let $Q$ be the D{\"u}rr-H{\o}yer quantum algorithm for $MAX$ function. Let $\hat{Q}^{2\log_2 (n)}$ be a boosted version of $Q$  (Algorithm \ref{alg:bost-qalgo}, Lemma \ref{lm:boosting}).  
\begin{algorithm}
\caption{Quantum Algorithm for the Single Source Longest Path Search problem.}\label{alg:longest-path}
\begin{algorithmic}
\State $t\gets (-\infty,\dots,-\infty)$
\State $t[s]\gets 0$
\For{$i=s+1\dots n$}
  \State $t[i] \gets \hat{Q}^{2\log_2 n}(t[j_1]+w(i,j_1),\dots,t[j_w]+w(i,j_w))$, where $w=|D'_i|$, $(v_{j_1},\dots,v_{j_{w}})=D'_i$.
\EndFor
\State \Return $t$
\end{algorithmic}
\end{algorithm}

Algorithm \ref{alg:longest-path} has the following property:
\begin{theorem}\label{th:longest-path}
Algorithm \ref{alg:longest-path} solves the Single Source Longest Path Search problem with expected running time $O(\sqrt{nm}\log n)$ and error probability $O(1/n)$.
\end{theorem}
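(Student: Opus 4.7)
My plan is to mirror the structure of the proofs of Lemma~\ref{lm:dp-general-complx} and Lemma~\ref{lm:dp-max-compl}, adapted to the ``forward'' sweep $i=s+1,\dots,n$ that Algorithm~\ref{alg:longest-path} performs along the reverse-adjacency lists $D'_i$, rather than the ``backward'' sweep used in Algorithm~\ref{alg:general-dp}.

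First I would establish correctness by induction on $i$. Because the graph is topologically sorted, every $j\in D'_i$ satisfies $j<i$, so by the inductive hypothesis $t[j]$ already stores the true longest-path length from $v_s$ to $v_j$ (or $-\infty$ if $v_j$ is unreachable). Combined with the initialisation $t[s]=0$ and $t[k]=-\infty$ otherwise, the update $t[i]=\max_{j\in D'_i}(t[j]+w(i,j))$ is exactly the standard Bellman-style DAG longest-path recurrence, so conditional on every quantum invocation being successful the returned array is correct; vertices with index smaller than $s$ are correctly reported as unreachable because no path can run backwards through the topological order.

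For the expected running time I would invoke Lemma~\ref{lm:boosting} with $k=2\log_2 n$, giving expected cost $O(\sqrt{|D'_i|}\log n)$ for each vertex; the additive offset $w(i,j)$ can be folded into the query oracle and adds no asymptotic overhead. Summing and applying the Cauchy-Bunyakovsky-Schwarz inequality,
\[\sum_{i=s+1}^{n}\sqrt{|D'_i|}\le \sqrt{n\sum_{i=s+1}^{n}|D'_i|}=\sqrt{nm},\]
since the in-degrees sum to $|E|=m$. Multiplying by the $\log n$ factor gives the claimed $O(\sqrt{nm}\log n)$ expected running time.

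For the error bound I would use Lemma~\ref{lm:boosting} to drive each single-call error probability to $O(1/n^2)$ and then union-bound over the at most $n$ calls, exactly as in the proof of Lemma~\ref{lm:dp-err}, to obtain total error $1-(1-1/n^2)^{n}=O(1/n)$. The only point that I expect to have to treat carefully, and what I would flag as the mild obstacle, is that the theorem asks for the \emph{entire} array $t$ rather than a single value $t[a]$; this prevents a direct appeal to the sharper path-based argument of Lemma~\ref{lm:dp-max-compl}, which collapses the error to the length $q$ of the optimal-ancestor chain of a single queried vertex. However, since the plain union bound already matches the claimed $O(1/n)$, the remaining work is presentational rather than technical.
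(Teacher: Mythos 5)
Your proposal is correct and follows essentially the same route as the paper: the paper establishes correctness of the recurrence $t[i]=\max_j(t[j]+w(i,j))$ (by a short contradiction argument rather than your explicit induction, but the content is the same) and then cites Lemmas~\ref{lm:dp-time} and~\ref{lm:dp-err} for the running time and error bound, which is exactly the Cauchy--Bunyakovsky--Schwarz summation and union-bound calculation you carry out in detail. Your closing remark that the whole-array output forces the plain union bound rather than the sharper path-based argument of Lemma~\ref{lm:dp-max-compl} is a correct and worthwhile observation that the paper leaves implicit.
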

\begin{proof}
Let us prove the correctness of the algorithm. In fact, the algorithm computes $t[i]=\max(t[j_1]+w(i,j_1),\dots,t[j_w]+w(i,j_w))$. Assume that $t[i]$ is less than the length of the longest path. Then there is $v_z\in D_i'$ that precedes $v_i$ in the longest text. Therefore, the length of the longest path is $t[z]+w(i,z)>t[i]$. This claim contradicts with definition of $t[i]$ as maximum. The bounds for the running time and the error probability follows from Lemmas \ref{lm:dp-time}, \ref{lm:dp-err}. 
\end{proof}

\subsection{The Diameter Search Problem for a Unweighted DAG}

Let us consider an unweighted DAG $G=(V,E)$. Let $len(i,j)$ be the length of the shortest path between $v_i$ and $v_j$. If the path does not exist, then $len(i,j)=-1$.  The diameter of the graph $G$ is $diam(G)=\max\limits_{i,j\in\{1,\dots,|V|\}}len(i,j)$.  
For a given graph $G=(V,E)$, we should find the diameter of the graph.

It is easy to see that the diameter is the length of a path between a non-sink vertex and some other vertex. If this fact is false, then the diameter is $0$. 

Using this fact, we can present the algorithm. 

Let $t^z=(t^z[1],\dots,t^z[n])$ be an array which stores shortest paths from vertices to vertex $v_z\in V\backslash L$. Let $Q$ be the D{\"u}rr-H{\o}yer quantum algorithm for the $MIN$ function. Let $\hat{Q}^{2\log_2 (n)}$ be a boosted version of $Q$  (Algorithm \ref{alg:bost-qalgo}, Lemma \ref{lm:boosting}). Let $Q_{max}$ and $\hat{Q}_{max}^{\log_2 (n)}$ be the quantum algorithm and the boosted version of the algorithm for the $MAX$ function that ignore $+\infty$ values.  
\begin{algorithm}
\caption{Quantum Algorithm for the Diameter Search problem.}\label{alg:diam}
\begin{algorithmic}
\For{$z=\hat{n}\dots 1$}
\State $t^z\gets (+\infty,\dots,+\infty)$
\State $t^z[z]\gets 0$
\For{$i=z+1\dots n$}
  \State $t^z[i] \gets \hat{Q}^{2\log_2 n}(t^z[j_1],\dots,t^z[j_w])+1$, where $w=|D'_i|$, $(v_{j_1},\dots,v_{j_{w}})=D'_i$.
\EndFor
\EndFor
\State $diam(G)=\hat{Q}_{max}^{\log_2 (n)}(0,t^1[2],\dots,t^1[n],t^2[3],\dots,t^2[n],\dots,t^{\hat{n}}[\hat{n}+1],\dots,t^{\hat{n}}[n])$
\State \Return $diam(G)$
\end{algorithmic}
\end{algorithm}

Algorithm \ref{alg:diam} has the following property:
\begin{theorem}\label{th:diam}
Algorithm \ref{alg:diam} solves the Diameter Search problem with expected running time $O(\hat{n}(n+\sqrt{nm})\log n)$ and error probability $O(1/n)$.
\end{theorem}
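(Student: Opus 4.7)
The plan is to mirror the analysis of Theorem \ref{th:longest-path} but wrap it in the outer loop over the $\hat{n}$ potential source vertices, then close with a union bound. I would split the argument into correctness, a per-iteration resource accounting (reused $\hat{n}$ times), and an error analysis that invokes the one-sided-error trick of Lemma \ref{lm:dp-max-compl}.

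For correctness, fix a source $v_z \in V \setminus L$. By induction on the topological index $i \geq z$, if every quantum MIN call so far has returned a true minimum of its inputs, then $t^z[i] = len(z,i)$: the base case $t^z[z]=0$ is set explicitly, and the recurrence $len(z,i) = 1 + \min_{j \in D'_i} len(z,j)$ (with the convention $+\infty$ for unreachable vertices) is exactly what the inner assignment implements. Since every diameter-realizing pair has its starting endpoint in $V\setminus L$ (otherwise the pair contributes $0$, already included in the final MAX), the $\hat{Q}_{max}^{\log_2 n}$ call returns $diam(G)$.

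For the running time, one iteration of the outer loop costs $O(n)$ for initializing $t^z$ plus $\sum_{i>z} O(\sqrt{|D'_i|}\log n)$ for the boosted MIN calls. Applying Cauchy--Bunyakovsky--Schwarz exactly as in Lemma \ref{lm:dp-general-complx}, together with $\sum_i |D'_i| = m$, bounds the quantum work per iteration by $O(\sqrt{nm}\log n)$, giving per-iteration cost $O(n + \sqrt{nm}\log n)$. Multiplying by $\hat{n}$ and absorbing the final $O(\sqrt{\hat{n}n}\log n)$ cost of the boosted MAX (whose input has length at most $\hat{n}n$) yields the claimed $O(\hat{n}(n + \sqrt{nm})\log n)$ expected running time.

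The error analysis is the place I expect to do the most careful bookkeeping, and it is where I would lean hardest on Lemma \ref{lm:dp-max-compl}. The key observation is that $\hat{Q}^k$ applied to MIN is one-sided: since D{\"u}rr--H{\o}yer always returns an actual input element and the outer boosting takes another MIN, the returned value is never strictly below the true minimum of the inputs. Therefore, for fixed $z$, an erroneous computation at a predecessor $v_j$ off the shortest-path tree can only inflate $t^z[j]$, which cannot push the MIN at a later descendant below the true distance; only the MIN calls along the implicit shortest-path tree rooted at $v_z$ are critical, and there are at most $n$ of them. With boosting parameter a sufficiently large constant multiple of $\log_2 n$ (a minor strengthening of the $2\log_2 n$ in the pseudocode, needed to absorb the extra factor of $\hat{n}$), each critical call errs with probability $O(1/n^3)$, so one outer iteration is fully correct with probability $1 - O(1/n^2)$, and a union bound over the $\hat{n}$ outer iterations together with the final boosted MAX yields total error $O(1/n)$.
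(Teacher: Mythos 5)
Your proposal follows essentially the same route as the paper's proof, which is only a two-line sketch deferring correctness to the argument of Theorem \ref{th:longest-path} and the resource bounds to Lemmas \ref{lm:dp-time} and \ref{lm:dp-err}; you simply fill in the induction, the Cauchy--Bunyakovsky--Schwarz accounting, and the one-sided-error bookkeeping that the paper leaves implicit. Your one substantive addition is the observation that the boosting parameter $2\log_2 n$ gives per-call error $O(1/n^2)$, which under a union bound over the roughly $\hat{n}\cdot n$ critical MIN calls (one per vertex per outer iteration, all of which must succeed since an inflated $t^z[i]$ can inflate the final MAX) only yields $O(\hat{n}/n)=O(1)$ total error, so a constant-factor increase of the boosting parameter is indeed needed to reach the claimed $O(1/n)$ --- a point the paper glosses over and that you are right to flag.
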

\begin{proof}
The correctness of the algorithm can be proven similar to the proof of Theorem \ref{th:longest-path}. The bounds for the running time and the error probability follows from Lemmas \ref{lm:dp-time}, \ref{lm:dp-err}. 
\end{proof}

{\bf Acknowledgements.}
This work was supported by Russian Science Foundation Grant 17-71-10152.

A part of work was done when K. Khadiev visited University of Latvia. We thank Andris Ambainis, Alexander Rivosh and Aliya Khadieva for help and useful discussions.

\bibliographystyle{plain}
\bibliography{tcs}





\end{document}